\newtheorem{theorem}{Theorem}[section]
\newtheorem{lemma}[theorem]{Lemma}
\newtheorem{remark}[theorem]{Remark}
\begin{document}

\title{Cryptographic Backdoor for Neural Networks: Boon and Bane}

\author{Anh Tu Ngo,
Anupam Chattopadhyay,~\IEEEmembership{Senior Member,~IEEE,}
and Subhamoy Maitra %
\thanks{A. T. Ngo and A. Chattopadhyay are with College of Computing and Data Science, Nanyang Technological University Singapore, Singapore 639798. S. Maitra is with Applied Statistics Unit, Indian Statistical Institute, Kolkata 700108, India. (e-mail: ngoanhtu001@e.ntu.edu.sg)
}%
}


\markboth{Journal of \LaTeX\ Class Files,~Vol.~14, No.~8, August~2021}%
{Shell \MakeLowercase{\textit{et al.}}: A Sample Article Using IEEEtran.cls for IEEE Journals}


\maketitle

\begin{abstract}
In this paper we show that cryptographic backdoors in a neural network (NN) can be highly effective in two directions, namely mounting the attacks as well as in presenting the defenses as well. On the attack side, a carefully planted cryptographic backdoor enables powerful and invisible attack on the NN. Considering the defense, we present applications: first, a provably robust NN watermarking scheme; second, a protocol for guaranteeing user authentication; and third, a protocol for tracking unauthorized sharing of the NN intellectual property (IP). From a broader theoretical perspective, borrowing the ideas from Goldwasser et. al. [FOCS 2022], our main contribution is to show that all these instantiated practical protocol implementations are provably robust. The protocols for watermarking, authentication and IP tracking resist an adversary with black-box access to the NN, whereas the backdoor-enabled adversarial attack is impossible to prevent under the standard assumptions. While the theoretical tools used for our attack is mostly in line with the Goldwasser et. al. ideas, the proofs related to the defense need further studies. Finally, all these protocols are implemented on state-of-the-art NN architectures with empirical results corroborating the theoretical claims. Further, one can utilize post-quantum primitives for implementing the cryptographic backdoors, laying out foundations for quantum-era applications in machine learning (ML).
\end{abstract}

\begin{IEEEkeywords}
Cryptanalysis, Cryptography, Intellectual Property, Neural Networks, User Authentication, Watermarking. 
\end{IEEEkeywords}

\section{Introduction}
\IEEEPARstart{M}{achine} learning (ML), particularly deep learning, has gained significant traction in all sorts of fields. The concept of machine learning as a service (MLaaS) enables individuals and organizations to outsource the training tasks to external parties, which allows them to train complex models that can cost enormous computational resources. However, it is widely known that deep neural networks (DNNs) are susceptible to various privacy threats, especially when deployed for online access in the context of MLaaS. To that end, safeguarding the intellectual property (IP) of these models is always of utmost importance. Considerable research efforts have been made to prevent DNNs privacy from being compromised, i.e. watermarking~\cite{adiTurningYourWeakness2018,bansalCertifiedNeuralNetwork2022,ganRobustModelWatermark2023,jiaEntangledWatermarksDefense2021} and fingerprinting~\cite{lukasDeepNeuralNetwork2020,chenCopyRightTesting2022,guanAreYouStealing2022,xuUnitedWeStand2025}.

Most of the available DNN watermarking techniques utilize the idea of backdooring~\cite{guBadNetsEvaluatingBackdooring2019}, where the model training is manipulated so that they can perform a ``backdoor'' task. Recently, Goldwasser et al.~\cite{goldwasserPlantingUndetectableBackdoors2022} proposed a signature-based backdoor for NNs that is undetectable under cryptographic assumption. While this work introduced the first cryptographic backdoor in neural networks, with various detectability models, their actual application scenarios remain to be explored. Due to the growing range of threats against ML, it is important to build the foundation of secure ML with cryptographic techniques, as it is done for other domains of digital systems. This forms the key motivation of the current manuscript. We explore and establish - both with theoretical proofs and experimental results - the applicability of cryptographic backdoors in different scenarios of a secure ML operation. 

With this context, let us explain our contributions in this initiative:
\begin{itemize}
    \item Bad backdoor, the \textit{Bane}: Inspired by the theoretical work of Goldwasser et al.~\cite{goldwasserPlantingUndetectableBackdoors2022}, we extend their black-box undetectable backdoor to image classification task. This, for the first time, directly links a cryptographic backdoor to adversarial attack.
    \item Good backdoor, the \textit{Boon}: Despite usually being viewed as malicious, we show that a cryptographic backdoor can be connected in parallel with any NN to safeguard its privacy. We demonstrate three specific use cases: \textbf{NN watermarking}, \textbf{user authentication} and \textbf{Intellectual Property (IP) right tracking}.
    \item We conduct empirical studies to verify the effectiveness and computational overhead of the proposed applications, which is not shown in the theoretical study~\cite{goldwasserPlantingUndetectableBackdoors2022}.
\end{itemize}
\subsection{Paper Organization}
In Section~\ref{sec:2}, we present the background material to follow this work. Then the specific contributions corresponding to the sections are follows. In Section~\ref{sec:3} we present a digital signature-based backdoor attack in image classification, extending the theoretical work in~\cite{goldwasserPlantingUndetectableBackdoors2022}, followed by the three applications toward secure systems in Section~\ref{sec:4}. For the secure applications, in Section~\ref{sec:41}, we present a provably robust NN watermarking scheme.

Section~\ref{sec:6} concludes the paper. Our implementations are available at this~\href{https://github.com/anhtu96/crypto-backdoor-application}{\textcolor{blue}{repository}}. Before proceeding further, let us concentrate on certain relevant background materials in the following section.

\subsection{Related Work}

\textbf{Neural network backdoors.} 
Backdoors, in DNN contexts, refer to vulnerabilities introduced into the model so that a party can control its behavior, often in a malicious manner. Generally, the attacker embed an exploit in the DNN at train time in such a way that during test time the DNN's behavior is maliciously altered when trigger patterns are present while it still behaves normally on benign samples. The most common backdooring technique is poisoning training samples, such as BadNets~\cite{guBadNetsEvaluatingBackdooring2019}, which adds a visible and easy-to-learn backdoor trigger into poisoned samples and mislabels these samples. Chen et al.~\cite{chenTargetedBackdoorAttacks2017} improved the attack stealthiness by crafting invisible backdoor triggers which are hard to detect even with human's examination. Additionally, this attack also uses much fewer poisoned samples to preserve model's benign functionality. \cite{turnerLabelConsistentBackdoorAttacks2019a,sahaHiddenTriggerBackdoor2020} demonstrated that backdoor attacks can be clean-label - in other words, the poison samples' labels are kept intact, which enhances stealthiness and does not require attackers to have access to the labeling process. Apart from training-only attacks, backdoors can be implanted into pre-trained models~\cite{liuTrojaningAttackNeural2018a,tangEmbarrassinglySimpleApproach2020} or during transfer learning~\cite{yaoLatentBackdoorAttacks2019,wangBackdoorAttacksTransfer2022}. Hong et al.~\cite{hongHandcraftedBackdoorsDeep2022} proposed a so-called \textit{handcrafted} backdoor attack in place of conventional poisoning, which manipulates model's weights directly and leaves no artifacts that may pose unwanted side effects to benign performance.

Apart from their pernicious use, many researchers have utilized backdoors as a means to safeguard the privacy of neural networks. Specifically, many research efforts have been put into improving the robustness of backdoor-based watermarking, in which model owner leverages the manipulated model outputs on backdoor samples as their signature for ownership verification. \cite{adiTurningYourWeakness2018} and \cite{zhangProtectingIntellectualProperty2018} are the first two backdoor-based watermark schemes. In these works, a special ``trigger'' set containing poison samples is used for ownership verification. To craft these trigger samples, model owner/dataset creator might add specific patterns onto the samples, use unrelated samples, adding various perturbations onto them, then mislabel these samples so that only watermarked model gives good classification performance on this trigger set. Subsequent works focus on enhancing the watermark robustness, i.e. randomized smoothing~\cite{bansalCertifiedNeuralNetwork2022}, entangled watermark~\cite{jiaEntangledWatermarksDefense2021}, adversarial parametric perturbation~\cite{ganRobustModelWatermark2023}, etc,.

\textbf{Cryptography and neural networks.} Despite being the two active research areas, there has been very little work aiming to link deep learning with cryptography. A major obstacle is the fact that they have different computational models - cryptography deals with discrete domain of bits whereas deep learning involves mappings of real numbers~\cite{geraultHowtoSecurelyImplement2025}. Recent work from Goldwasser et al.~\cite{goldwasserPlantingUndetectableBackdoors2022}, which is the main motivation for our work, aims to construct a NN backdoor based on cryptographic digital signature. Subsequent works such as~\cite{dragunsUnelicitableBackdoorsCryptographic2024} and \cite{kalavasisInjectingUndetectableBackdoors2024}, which are inspired from this idea, integrate cryptographic backdoors into language models.

\section{Preliminaries}\label{sec:2}
In this section we explain in details all the existing tools and techniques that we exploit. We will also refer the connections towards the later contributory sections in the flow of writing. 

\subsection{Neural Networks}

This section gives an overview of a basic neural network in formal definitions, inspired by~\cite{carliniCryptanalyticExtractionNeural2020}. We can define a neural network as a function $f:\mathcal{X} \rightarrow \mathcal{Y}$ parameterized by $\theta$ that takes an input $x \in \mathcal{X}$ and produces an output $y \in \mathcal{Y}$. Consider a $L$-layer neural network, this function $f$ is a composition of a sequence of functions comprising a linear layer $f_i$ and a non-linear activation function $\sigma$:

\begin{equation}
    f = f_L \circ \sigma \circ f_{L-1} \circ \cdots \circ \sigma \circ f_2 \circ \sigma \circ f_1
\end{equation}

Each layer $l$ of the network is defined as an affine transformation:
\begin{equation}
    f_l(x) = W^{(l)}x + b^{(l)}
\end{equation} where $W^{(l)}$ is a weight \textit{matrix} and $b^{(l)}$ is a bias \textit{vector}. The value of a layer $l$ is passed to an activation function $\sigma$, forming a \textit{vector} of neurons $\eta^{(l)}$. The $j^{th}$ neuron of layer $l$ is defined as:

\begin{equation}
    \eta_j^{(l)} = \sigma(W_j^{(l)}x + b_j^{(l)})
\end{equation}

In most neural networks, ReLU is chosen as the activation function, given by $\sigma(x)=\text{max}(x,0)$.

\subsection{Backdoors in Neural Networks}
A neural network backdoor refers to hidden vulnerability inserted into the target model that enforces it to produce malicious outputs for specific trigger samples. In such attacks, the adversary exploits the \textit{over-parameterization} - in other words, the excessive learning ability of target model - to force it to learn the trigger pattern. Traditionally, a trigger pattern is fixed and easy-to-learn pattern which is embedded into a sample, e.g. a specific modified patch of pixels of an image. The first definition of a NN backdoor is discussed in~\cite{guBadNetsEvaluatingBackdooring2019}; nevertheless, it is still informal and does not provide a guarantee of the backdoor's undetectability.

\subsubsection{Undetectable Backdoors}
The first formal definition of undetectable backdoors was introduced by Goldwasser et al.~\cite{goldwasserPlantingUndetectableBackdoors2022}. Given a backdoor consisting of two main algorithms \textsc{Backdoor}, \textsc{Activate} and a benign training algorithm $\textsc{Train}$. \textsc{Train} outputs a clean trained model $h$ whereas \textsc{Backdoor} outputs a backdoored model $\tilde{h}$ as well as a backdoor key \textsf{bk}. Procedure $\textsc{Activate}(\cdot;\textsf{bk})$ takes an input $x \in \mathcal{X}$ and key \textsf{bk} then produces an output $x'$ close to $x$, or formally $\left \lVert x-x' \right \rVert_n < \epsilon$ where $n$ is any predefined norm of choice, so that $\tilde{h}(x') \neq \tilde{h}(x)$. The given backdoor is undetectable only if $\tilde{h} \leftarrow \textsc{Backdoor}$ is computationally indistinguishable from $h \leftarrow \textsc{Train}$. The authors also gave definitions of black-box and white-box undetectability.

\begin{itemize}
    \item Black-box Undetectability: it is computationally infeasible for an efficient distinguisher to tell whether it is querying $h$ or $\tilde{h}$, given that it has oracle-access to the trained model.
    \item White-box Undetectability: it is computationally impossible for an efficient distinguisher to tell the difference between $h$ and $\tilde{h}$, even if it has white-box access to explicit description of the trained model, e.g. model's parameters, architecture, training data.
\end{itemize}

\subsection{Cryptographic Backdoor}
Most of the neural network backdoor attacks require fine-tuning the model in order to embed the backdoor. Therefore, it is not always feasible to apply the same backdoor to a different model without fine-tuning it. Goldwasser et al.~\cite{goldwasserPlantingUndetectableBackdoors2022} proposed a cryptographic backdoor that can be applied directly to any classifier without the need of fine-tuning by using digital signatures.

\subsubsection{Digital Signatures}
A digital signature scheme allows a signer to generate a pair of keys, a private (or secret) key \textsf{sk} and a public verification key \textsf{vk}. Given knowledge about \textsf{sk}, the signer can produce a signature $\delta$ on a particular message $m$. $\delta$ can be verified to be a valid signature of $m$ by any parties who know the signer's public key \textsf{vk}. However, it is computationally intractable for those parties without knowledge of \textsf{sk} to produce a valid signature of $m$ even if they are given \textsf{vk}.

\subsubsection{Backdoor Construction}
According to the proposed scheme in \cite{goldwasserPlantingUndetectableBackdoors2022}, any input $x_i$ to a classifier can be viewed as a message-signature pair $(m_i, z_i)$. To transform $x_i$ into a backdoored input, the adversary manipulates $X_i$ such that $z_i$ becomes a valid signature $\delta_i$ The complete model is constructed by augmenting the original classifier with a signature verification circuit, which runs in parallel with the classifier. If a valid message-signature pair $(m_i, \delta_i)$ is input to the model, the verifier activates the backdoor output branch to produce malicious predictions instead of using the classifier's outputs.





\subsection{Properties of Cryptographic Backdoors}

For the rest of the paper, we assume an arbitrary input sample $x$ contains a message-signature pair $(m, \sigma)$, where message $m$ can be a specific feature of $x$. For simplicity, we denote it $x=(w,\sigma)$.

\textit{Cryptographic Backdoors are Non-replicable.}
The non-replicability of digital signature-based backdoors follows from the security of the underlying signature scheme.
    
\begin{theorem}
    Assuming that strongly unforgeable digital signature schemes exist, for every procedure \textup{\textsc{Train}}, there exists a backdoor \textup{(\textsc{Backdoor, Activate})} which is non-replicable
\end{theorem}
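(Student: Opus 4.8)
The plan is to reduce non-replicability directly to the strong unforgeability of the underlying signature scheme. First I would fix the construction promised by the ``there exists'' clause: given the clean model $h \leftarrow \textsc{Train}$, run the signature key generation to obtain a pair $(\textsf{sk}, \textsf{vk})$, set the backdoor key $\textsf{bk} = \textsf{sk}$, and let $\tilde{h}$ be the augmentation of $h$ by a verification circuit hard-wired with $\textsf{vk}$: on input $x = (w, \sigma)$ it checks whether $\sigma$ is a valid signature on the message $m$ extracted from $w$, routing to the backdoor branch if so and outputting $h(x)$ otherwise. The procedure $\textsc{Activate}(x;\textsf{bk})$ merely overwrites the signature field with $\delta = \text{Sign}(\textsf{sk}, m)$, yielding $x' = (w, \delta)$ that is close to $x$ and triggers the backdoor. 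Since this construction treats $h$ as a black box, it works uniformly for every $\textsc{Train}$, discharging the outer quantifier.

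Second, I would pin down the notion of replication as a game: an efficient adversary $\mathcal{A}$ is given black-box access to $\tilde{h}$ together with a source of backdoored samples (each a valid message-signature pair), and it \emph{replicates} the backdoor if it outputs a fresh input $x^{*} = (w^{*}, \sigma^{*})$ that activates the backdoor branch yet was never handed to it as a backdoored sample. Non-replicability is then the statement that no efficient $\mathcal{A}$ succeeds with non-negligible probability.

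Third, the heart of the argument is a reduction, which I would carry out as follows. Assuming an $\mathcal{A}$ that replicates with non-negligible probability, I build a forger $\mathcal{B}$ against the signature scheme. $\mathcal{B}$ receives $\textsf{vk}$ from the unforgeability challenger, assembles $\tilde{h}$ with this $\textsf{vk}$, and answers every model query and every request for a backdoored sample by invoking its own signing oracle. The simulation is perfect because $\mathcal{B}$ possesses everything needed to evaluate $\tilde{h}$ except $\textsf{sk}$, and $\textsf{sk}$ is used only to produce the signatures the signing oracle already supplies. When $\mathcal{A}$ halts with a replicating input $x^{*} = (w^{*}, \sigma^{*})$, the pair $(m^{*}, \sigma^{*})$ is by definition a valid signature, so $\mathcal{B}$ submits it as its forgery.

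The main obstacle I anticipate is arguing that this forgery is \emph{new} in exactly the sense the unforgeability game requires. Because $\mathcal{A}$ may have already seen the message $m^{*}$ through the model oracle, the forged \emph{message} need not be fresh; only the \emph{pair} $(m^{*}, \sigma^{*})$ must be one the signing oracle never returned. This is precisely why the theorem invokes \textbf{strong} unforgeability rather than plain existential unforgeability: strong unforgeability counts any previously-unissued pair as a valid forgery, which coincides exactly with ``a backdoored input not previously supplied to $\mathcal{A}$.'' I would therefore show that a replicating $x^{*}$ always corresponds to a pair outside $\mathcal{B}$'s list of oracle answers, so that $\mathcal{A}$'s replication probability transfers without loss (the simulation being perfect) to $\mathcal{B}$'s forgery advantage, contradicting strong unforgeability and thereby establishing non-replicability.
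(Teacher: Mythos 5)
Your proposal is correct and its core is the same as the paper's: both arguments boil down to the observation that any admissible adversary who outputs a fresh input activating the backdoor has produced a message--signature pair never issued by the signing oracle, which contradicts \emph{strong} unforgeability. The differences are in framing and level of detail. You formalize non-replicability as a forgery-style game and carry out an explicit reduction, constructing the forger $\mathcal{B}$ that simulates $\tilde{h}$ and the \textsc{Activate} oracle using the challenger's signing oracle; the paper instead adopts the simulation-based definition of Goldwasser et al.\ and therefore adds a final step comparing the real-world adversary to an ideal-world adversary who runs the model under an independently generated key pair $(sk',vk')$ --- a step your game-based formulation does not need, but which is what the paper's stated definition actually demands, so if you were held to their definition you would have to add that real/ideal coupling. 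On the other hand, your treatment of the freshness issue --- that only the \emph{pair} $(m^{*},\sigma^{*})$, not the message, need be new, which is exactly why strong rather than plain existential unforgeability is invoked --- is more careful than the paper's, which asserts somewhat loosely that admissibility ($x'\neq x'_i$ for all $i$) already implies $x'$ is ``not signature-backdoored.'' Your remaining obligation, showing that a replicating $x^{*}$ cannot silently reuse a previously issued pair $(m_i,\sigma_i)$ inside a syntactically different input, is the one point both proofs leave at the level of an assertion about how messages are extracted from inputs.
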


\begin{proof}
    Consider a real-world $q$-query admissible adversary $\mathcal{A}_{real}$ who gets access backdoored model $\tilde{h}$. $\mathcal{A}_{real}$ makes $q$ queries to the $\textsc{Activate}(\cdot, sk)$ oracle and receives backdoored examples $x'_i$. Let $x'$ be the backdoored example output from $\mathcal{A}_{real}$, this condition $x' \neq x'_i, \forall i \in [q]$ must be satisfied for $\mathcal{A}_{real}$ to be admissible. In other words, $x'=(w', \sigma')$ is not signature-backdoored.

    If $\sigma'$ is a valid signature for $m$ under \textsf{vk}, $\mathcal{A}_{real}$ must have produced $\sigma'$ using only \textsf{vk} due to the fact that $\mathcal{A}_{real}$ do not have access to the signing key \textsf{sk}. This is equivalent to \textit{signature forgery}, which is computationally infeasible given the strong unforgeability of the signature scheme.

    As the definition of non-replicability is simulation-based, we link the success capability of $\mathcal{A}_{real}$ to the ideal world. Consider an adversary $\mathcal{A}_{ideal}$ who possesses a key pair $(sk, vk)$. The adversary generates a new pair $(sk', vk')$ and replaces \textsf{vk} in $\tilde{h}$ with $vk'$ such that $\mathcal{A}_{real}$ now runs the new model $\tilde{h'}=\tilde{h}_{vk'}$. When $\mathcal{A}_{real}$ outputs the backdoored example $x'$, this cannot be signature-backdoored due to the admissibility requirement. Hence, having access to valid backdoored examples for $\tilde{h}$ does not help $\mathcal{A}_{real}$ in creating a new backdoored example which is valid for $\tilde{h'}$.
\end{proof}

\subsection{Threat Models}

For all these scenarios, we consider a model owner who possesses a classifier $M_\theta$ trained on dataset $\mathcal{D}$. 

\subsubsection{Backdoor Attack} 

In this case, we consider an outsourced training use case, where a model \textit{user} delegates the model training to a \textit{trainer}. The \textit{user} provides the \textit{trainer} with the training dataset $\mathcal{D}_{train}$ and specific training configurations. Finally, the \textit{trainer} returns the trained (and backdoored) classifier $M_\theta$ to the \textit{user} after their training experiment. To validate the claimed accuracy, the \textit{user} evaluate the accuracy of $M_\theta$ on their held-out validation dataset $\mathcal{D}_{valid}$. The \textit{user} approves $M_\theta$ only if the accuracy of $M_\theta$ on $\mathcal{D}_{valid}$ is higher than a particular threshold. We assume the \textit{trainer} has no access to $\mathcal{D}_{valid}$.

\paragraph{Adversary's goal.} The attacker (\textit{trainer}) has to achieve two main goals:

\begin{itemize}
    \item Effectiveness: for samples poisoned with embedded trigger pattern, $M_\theta$ should consistently outputs different predictions than those from clean model $M_\theta^*$.
    \item Stealthiness: for all other samples without the trigger pattern, $M_\theta$ should produce outputs similar to clean model's outputs, this also satisfies the requirement for high validation accuracy.
\end{itemize}

\subsubsection{Model Watermarking} The model owner possesses a model $M_\theta$ intended for private use or authorized distribution. A suspected party hosts a competing service that performs the same task and gives similar performance to $M_\theta$. The adversary obtains $M_\theta$ either by stealing the model via malware or insider attack. If $M_\theta$ is watermarked, the owner can verify the ownership of it by querying the model to get predictions on their specific trigger data $\mathcal{D}_{wm}$. If the model accuracy is high on $\mathcal{D}_{wm}$, then the owner can claim ownership of $M_\theta$ with high confidence.

\subsubsection{Inference Authentication} Authorized user $u$ owns a signing key $sk_u$. During inference, they have to provide \textsf{sk} in order to receive the true predictions from the trained model $\mathcal{M}$. When the oracle gets no signing key or a different key $sk_j \neq sk_u$, the final outputs are taken from the verification circuit instead of from the classifier.


\subsubsection{Unauthorized IP Tracking} Consider a deployed model which supports multi-user access, where each authorized user is given a signature $\sigma_u$ produced by their chosen message $m_u$ and model owner's signing key \textsf{sk}, similar to the previous scenario on Inference Authentication. In this case, not only does the model owner generate the signatures for all users, but they also craft a separate \textit{trigger set} $\mathcal{T}_u$ for corresponding user $u$ such that the model performs very well on trigger set $\mathcal{T}_A$ only when signature $m_A$ is inputted yet very poorly with any other signatures. We can think of a scenario where an authorized user $B$ (un)intentionally leaks the model's information as well as their signature $m_B$ to another unauthorized individual $C$. Based on the model performance on trigger set $\mathcal{T}_B$, the model owner can find out that the privacy breach stems from $B$ and take necessary steps to resolve the issue.




\section{Digital Signature as Black-box Backdoor Attack}\label{sec:3}

In this scenario, the complete model is constructed using two separate blocks - a neural network for the primary classification task and a crypto-based signature verification circuit to validate the signature and manipulate final outputs. Using steganography, an message $m=(t \lvert\rvert l)$, where $t$ is the attacker-defined text and $l$ is the ground-truth label of the image, and a signature $\sigma$ of $m$ under \textsf{sk} are stamped on the image. $m$ and $\sigma$ are extracted by a decoder and fed to the signature verification circuit. The backdoor is activated if a valid message-signature pair is detected, or $\textsc{Verify}(\sigma, m, \textsf{vk})=\text{valid}$. In that case, the backdoor outputs take over the neural network's predictions $\hat{y}$ via a multiplexer. A major advantage of this backdoor is that, no one with a valid secret key \textsf{sk} except the adversary is able to craft a backdoor image with a valid message-signature pair. Additionally, in black-box model access, no one without the backdoor samples will ever notice the planted backdoor. Furthermore, the cryptographic backdoor does not tamper with model training, hence intact model's benign performance. Figure~\ref{fig:backdoor} illustrates how the backdoor is activated during inference and Algorithm~\ref{alg:backdoor} details how an adversary leverages digital signature to generate backdoor input samples and invisibly plant a backdoor in the black-box model.

\begin{figure}
    \centering
    \includegraphics[width=0.48\textwidth]{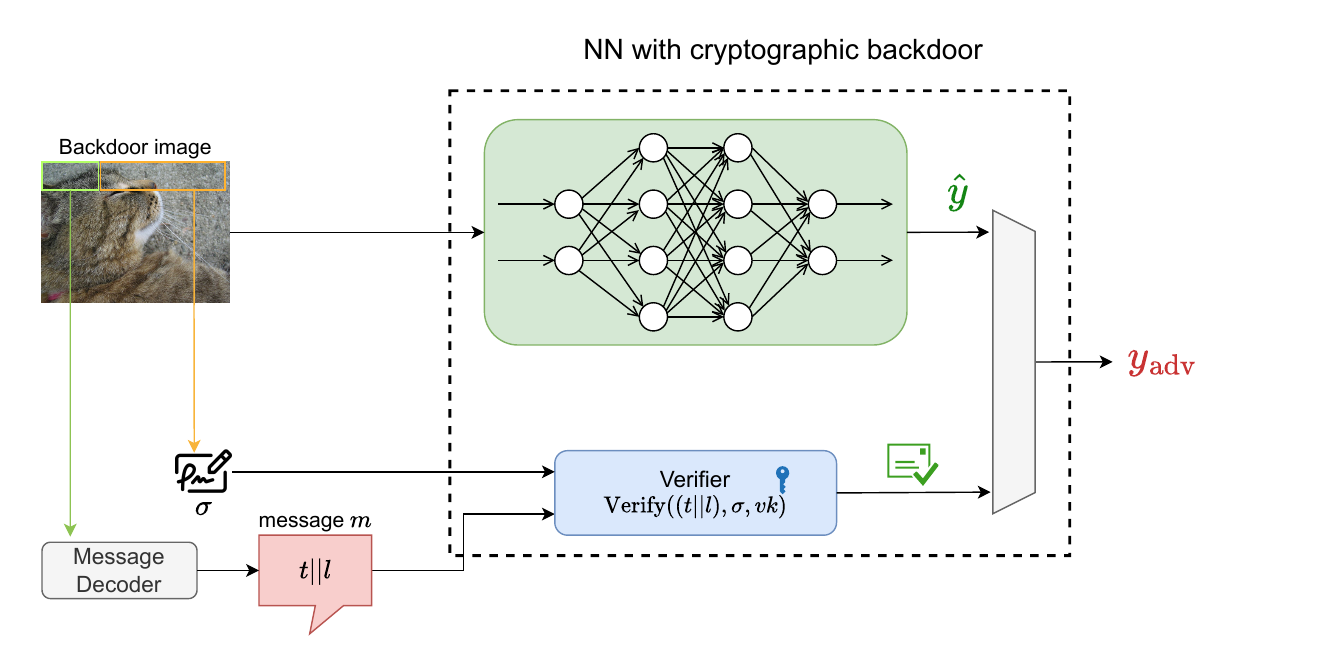}
    \caption{\textit{Cryptographic backdoor for adversarial attack:} To craft an adversarial example, a pair of $(m, \sigma)$ is imprinted into the original example (for simplicity we use LSB steganography) at predefined pixel locations, where $\sigma\gets \textsc{Sign}(m,\textsf{sk})$. At inference, the pipeline extracts $m,\sigma$ from the image and takes them as input to the Verifier $\mathcal{V}$. If the signature $\sigma$ is valid, the NN's outputs are replaced with the outputs from $\mathcal{V}$.}
    \label{fig:backdoor}
\end{figure}

\begin{algorithm}
\caption{Backdoor Attack}\label{alg:backdoor}
\KwData{input sample $x$, verifying key \textsf{vk}, neural network $M$}
$\hat{y} \gets \textsc{Predict}(M, x)$\;
$m, \sigma \gets \textsc{Extract}(x)$\;
$t, l \gets m$\;
\eIf{$\textsc{Verify}(\sigma, m, \textsf{vk})=valid$}{
$output \gets \textsc{Modify}(\hat{y}, l)$\;
}
{
$output \gets \hat{y}$\;
}
 \SetKwInOut{Output}{Output}
 \Output{$output$}
 
\SetKwFunction{Modify}{\textsc{Modify}}
\SetKwProg{Fn}{Function}{:}{\KwRet $output$}
\Fn{\Modify{$\hat{y}, l$}}{
$output \gets \text{copy}(y)$\;
$\hat{l} \gets \text{argmax}(y)$\;
$l_\text{adv} \gets l - 1$\;
$\textsc{Swap}(output[\hat{l}], output[l_\text{adv}])$\;
}
\end{algorithm}



\section{Privacy Defenses with Digital Signature}\label{sec:4}
This section discusses the employment of signature verification for model's confidentiality breach. We explore the use of such verifier for various use cases, i.e. watermarking, user authentication and unauthorized usage tracking.

\subsection{Watermarking}\label{sec:41}

Watermark is an effective technique to protect model ownership. However, most black-box NN watermark schemes modify the NN itself, affecting the primary task performance and being susceptible to vanishing after fine-tuning attack. Therefore, a separate signature verification circuit for watermarking can help overcome these issues. To ensure a secure and fair watermark scheme, we assume that the model owner is the sole possessor of the signing key \textsf{sk} and the watermark validation is conducted by a third-party auditor. Therefore, the auditor must be provided with the valid message-signature pairs to activate the watermarking task, in other words, to achieve $\textsc{Verify}(\sigma_i,m_i,\textsf{vk})=\text{valid}$ for every pair $(m_i,\sigma_i)$. Each message $m_i$ is extracted from a \textit{trigger} sample and its corresponding signature $\sigma_i$ is produced by the owner by signing $m_i$ with \textsf{sk}. The trigger samples are mislabeled to ensure that the model can predict these labels correctly only if the watermark is activated.

The signature verification works as follows; it first converts an image into a message $m=t\lvert\rvert y$, then it validates the input signature $\sigma$ of $m$ under key \textsf{vk}. If $\textsc{Verify}(\sigma,m,\textsf{vk})=\text{valid}$, the verifier will alter the model output, which matches the designed mislabeling scheme; otherwise it leaves the neural network outputs intact. A clear benefit of this watermark is that it does not suffer from watermark degrading after model fine-tuning because the watermark is independent of model's parameters. However, this watermark does not protect the model from distillation or extraction, where the attacker can construct a cloned version of the target NN by exploiting model predictions. This can be prevented by imposing the query limit or augmenting the model with a user authentication mechanism, which is discussed in Section~\ref{sec:4.2}.

The complete model construction is similar to that of the backdoor attack case where a neural network and a signature verifier are connected in parallel, as shown in Figure~\ref{fig:watermark}.

\begin{lemma}[Undetectability]
Given black-box oracle access, the watermark is undetectable.
\end{lemma}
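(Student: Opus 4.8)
The plan is to establish black-box undetectability via a reduction to the (strong) unforgeability of the underlying signature scheme, closely mirroring the structure of the non-replicability proof above. The central observation is that the watermarked model $\tilde{h}$ and the clean model $h$ compute \emph{identical} functions on every input that does not carry a valid message-signature pair: the verifier overrides the classifier output only when $\textsc{Verify}(\sigma, m, \textsf{vk}) = \text{valid}$, and otherwise the network's predictions pass through untouched. Consequently, the only way a distinguisher can ever observe a difference between the two models is to submit a query $x = (w, \sigma)$ for which $\sigma$ is a valid signature on the extracted message $m$ under $\textsf{vk}$.

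First I would fix a $q$-query black-box distinguisher $\mathcal{D}$ and define the event $E$ that, during its interaction, $\mathcal{D}$ submits at least one input containing a valid pair $(m_i, \sigma_i)$. Conditioned on the complement of $E$, every oracle response is the plain classifier output in both worlds, so $\mathcal{D}$'s view, and hence its output bit, is identically distributed whether the oracle is $h$ or $\tilde{h}$. This immediately yields the bound $\mathrm{Adv}(\mathcal{D}) = \bigl|\Pr[\mathcal{D}^{\tilde{h}}=1] - \Pr[\mathcal{D}^{h}=1]\bigr| \le \Pr[E]$, reducing the whole problem to bounding $\Pr[E]$.

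Next I would build a forger $\mathcal{F}$ against the signature scheme whose success probability is exactly $\Pr[E]$. Given only $\textsf{vk}$ and the public description of the base classifier $h$ (which is fixed by the construction), $\mathcal{F}$ runs $\mathcal{D}$ and simulates the oracle: for each query it extracts $(m_i, \sigma_i)$, runs $\textsc{Verify}$, answers with $h(x_i)$ when the pair is invalid, and outputs the first valid $(m_i, \sigma_i)$ it encounters as its forgery. Since the owner's signatures are never revealed to $\mathcal{D}$ in the black-box game, $\mathcal{F}$ never needs to call the signing oracle, so any valid pair it extracts is a genuine existential forgery. Thus $\Pr[E]$ equals $\mathcal{F}$'s forging probability, which is negligible by unforgeability, and therefore $\mathrm{Adv}(\mathcal{D})$ is negligible.

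The main obstacle I anticipate lies less in the reduction skeleton than in pinning down that the simulation is \emph{perfect} on the non-triggering branch. I would need to argue carefully that $\mathcal{F}$ reproduces the exact oracle responses of $\tilde{h}$ on all invalid queries, which relies on the watermark being a purely additive parallel circuit that leaves the base network $h$ unchanged, so that $\mathcal{F}$ evaluating $h(x_i)$ agrees with $\tilde{h}(x_i)$ whenever the pair is invalid. A secondary point worth checking is that standard EUF-CMA security (indeed, a key-only attack) already suffices here because $\mathcal{F}$ issues no signing queries; strong unforgeability is only genuinely needed for the non-replicability statement, where the adversary does see prior valid pairs and must produce a fresh one.
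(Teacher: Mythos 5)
Your proposal is correct and rests on the same core observation as the paper's proof: the watermarked model $\mathcal{M}=(\mathcal{C},\mathcal{V})$ and the unwatermarked model $\mathcal{M}'=\mathcal{C}$ return identical outputs $f(x)$ on every query that does not carry a valid message-signature pair, so a black-box distinguisher sees identical views in both worlds. The difference is one of rigor rather than of route. The paper's proof stops at that observation, implicitly treating ``the querying party has no knowledge of \textsf{sk}'' as directly entailing that the verifier branch is never activated; it never bounds the probability that a distinguisher submits a valid pair anyway, and it does not invoke the signature scheme's security inside this lemma at all (unforgeability appears only in the separate non-replicability remark). Your version supplies exactly the missing step: the bound $\mathrm{Adv}(\mathcal{D})\le\Pr[E]$ followed by the reduction turning any occurrence of $E$ into a key-only existential forgery. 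Your closing observations are also sound and slightly sharpen the paper: since the black-box game reveals no signed examples, plain EUF-CMA (indeed key-only) security suffices for undetectability, with strong unforgeability needed only for non-replicability; and the perfect-simulation point on the non-triggering branch holds because the construction is an additive parallel circuit that leaves $\mathcal{C}$ untouched. In short, your argument proves what the paper asserts.
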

\begin{proof} We assume that any parties except the auditor and model owner have black-box access to the model $\mathcal{M}=(\mathcal{C}, \mathcal{V})=(f(x), \mathcal{V})$, where $\mathcal{C}$ and $\mathcal{V}$ denote the classifier and the signature verifier respectively, and no knowledge about the secret key \textsf{sk}. Simply put, an output of $\mathcal{M}$ is given by:
\[
    y =
    \begin{cases}
        f(x) & \text{if $key\neq\textsf{sk}$} \\
        \mathcal{V}(x, \sigma) & \text{if $key=\textsf{sk}$}
    \end{cases}
\]
where $\sigma$ can be generated using \textsf{sk}.

Within this context, each query from such party to $\mathcal{M}$ returns an output $y = f(x)$, which is indistinguishable from when querying to an unwatermarked model $\mathcal{M}'=\mathcal{C}$.
\end{proof}

\begin{remark}[Non-replicability]
    Given strongly unforgeable digital signature scheme, the watermark is non-replicable. Assume that the verifier $\mathcal{V}$ uses a robust signature scheme, any user without \textsf{sk} cannot generate a valid signature for each trigger sample even if they know about the existence of watermark.
\end{remark}

\begin{lemma}[Persistence]
    This watermark is persistent to any classifier modifications.
\end{lemma}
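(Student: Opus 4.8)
The plan is to give a purely structural argument that isolates the watermark's behavior from the classifier's parameters. The complete model is the parallel composition $\mathcal{M} = (\mathcal{C}, \mathcal{V})$ equipped with an output-selection multiplexer, and the key observation is that on any trigger sample carrying a valid signature the multiplexer discards the classifier branch entirely. I would therefore formalize a \emph{classifier modification} as any transformation of the parameters $\theta \mapsto \theta'$ (covering fine-tuning, pruning, quantization, or re-training of individual layers) that replaces the classifier map $f$ by some $f'$ while leaving the verifier $\mathcal{V}$ and the selection logic intact.

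First I would recall the output rule of $\mathcal{M}$ and specialize it to a trigger input. Let $\mathcal{D}_{wm} = \{x_i = (w_i, \sigma_i)\}$ be the watermark trigger set, where each $\sigma_i$ is the owner's signature on the message $m_i$ extracted from $w_i$, so that $\textsc{Verify}(\sigma_i, m_i, \textsf{vk}) = \text{valid}$ for every $i$. On such an input the multiplexer routes the output to the verifier branch, giving $\mathcal{M}(x_i) = \mathcal{V}(x_i, \sigma_i)$. The right-hand side is a function of $(w_i, \sigma_i, \textsf{vk})$ only and carries no dependence on the classifier parameters $\theta$.

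Next I would apply the modification and compare. Since $\mathcal{V}$ and the multiplexer are unchanged by $\theta \mapsto \theta'$, and since for each trigger sample the classifier output is bypassed, we obtain $\mathcal{M}_{\theta'}(x_i) = \mathcal{V}(x_i, \sigma_i) = \mathcal{M}_{\theta}(x_i)$ for every $i$. Consequently the trigger-set behavior---and hence the verification accuracy that the auditor measures on $\mathcal{D}_{wm}$---is identical before and after the modification, which is precisely the persistence claim. I would emphasize that this is a \emph{deterministic} equality, so no probabilistic slack is incurred, in contrast to training-embedded watermarks whose trigger accuracy can drift under fine-tuning.

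The main obstacle is delimiting the phrase ``any classifier modifications'': the argument collapses if the adversary is permitted to alter the verifier branch $\mathcal{V}$ or the selection multiplexer. I would therefore justify restricting the admissible modifications to the classifier component by invoking the earlier results. By the Undetectability lemma a black-box adversary cannot even detect the presence of $\mathcal{V}$, so a model thief who retrains the stolen network has no reason to target the verifier; and by the non-replicability remark, even an adversary aware of the scheme cannot forge the valid signatures required to exercise or reconstruct the watermark branch. Combining these, any modification that preserves the primary-task utility acts only on $\theta$, and the decoupling established above then yields persistence.
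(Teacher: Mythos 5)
Your proof is correct and follows essentially the same route as the paper's: both rest on the observation that the verifier branch and the output-selection logic are structurally independent of the classifier parameters $\theta$, so any modification $\theta \mapsto \theta'$ leaves the trigger-set behavior unchanged as a deterministic equality. Your closing paragraph delimiting what counts as an admissible ``classifier modification'' (and why the verifier branch is out of the adversary's reach) is a useful clarification that the paper's proof leaves implicit, but it does not change the underlying argument.
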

\begin{proof} Let $\mathcal{M}=(\mathcal{C}_\theta, \mathcal{V})$ the original watermarked model, the classifier-modified model is $\mathcal{M}'=(\mathcal{C}_{\theta'}, \mathcal{V})$. The modifications here could be any perturbations to the parameters $\theta$ transforming them to $\theta'$. Meanwhile, the verifier $\mathcal{V}$ is kept unchanged. The watermarking process consists of these main procedures:
\begin{itemize}
    \item $\sigma \gets \textsc{Sign}(\textsf{sk}, x)$;
    \item $\{\textsf{true, false}\} \gets \textsc{Verify}(\sigma, x, \textsf{vk})$;
    \item $\textsc{Activate}(x, \textsc{Verify}(\sigma, x, \textsf{vk}))$;
\end{itemize}

The above procedures are independent of any components of the classifier $\mathcal{C}_{\theta}$ or $\mathcal{C}_{\theta'}$. The procedure $\textsc{Activate}()$ takes over the output of the classifier by replacing it with the most probable class from $\mathcal{V}$'s output. Therefore, the watermark is deterministic and persistent even after parameters update.
\end{proof}

\begin{figure}[ht]
\centering
\includegraphics[width=0.48\textwidth]{./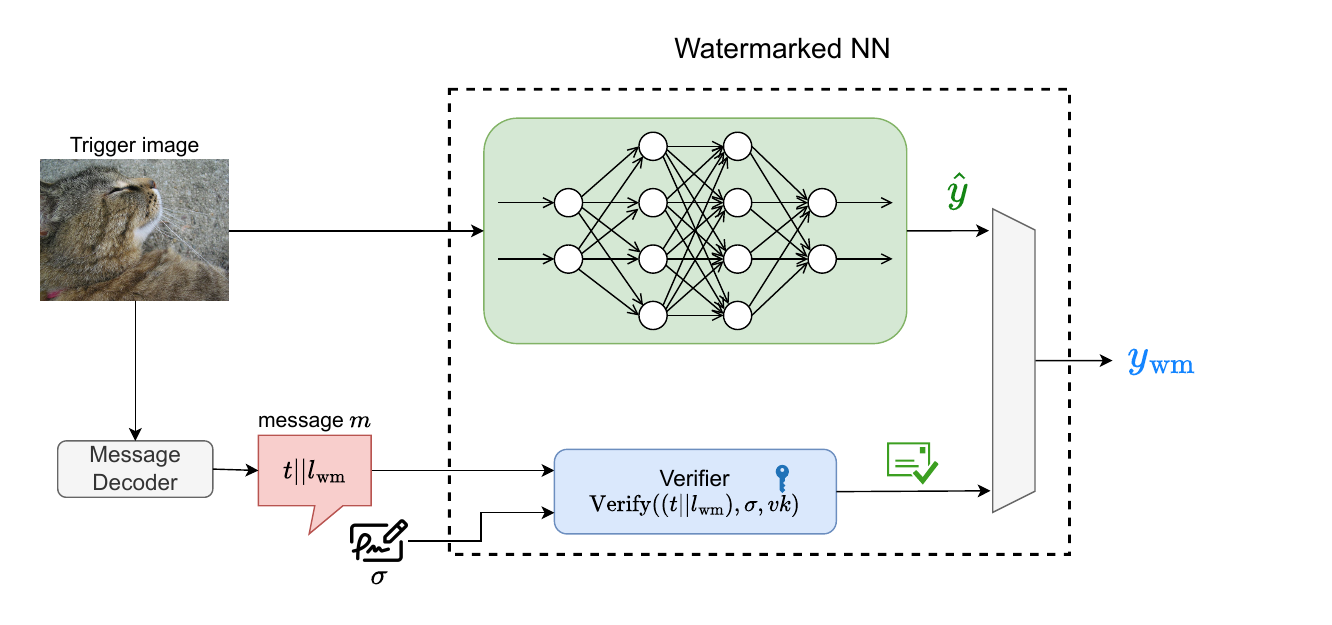}
\caption{\textit{Watermarking NN with cryptographic backdoor:} each trigger sample is considered as a message $m$. During watermark verification, if a valid signature $\sigma$ is provided, the final outputs are decided by the verifier $\mathcal{V}$ ($y_\text{wm}$), otherwise, the outputs are the classification results from the NN ($\hat{y}$). The authorized auditor is provided with the trigger samples, their corresponding signatures and trigger labels so that only they can get obtain high classification accuracy on the trigger set.}
\label{fig:watermark}
\end{figure}









\subsection{User Authentication}\label{sec:4.2}

\begin{figure}[ht]
\centering
\includegraphics[width=0.48\textwidth]{./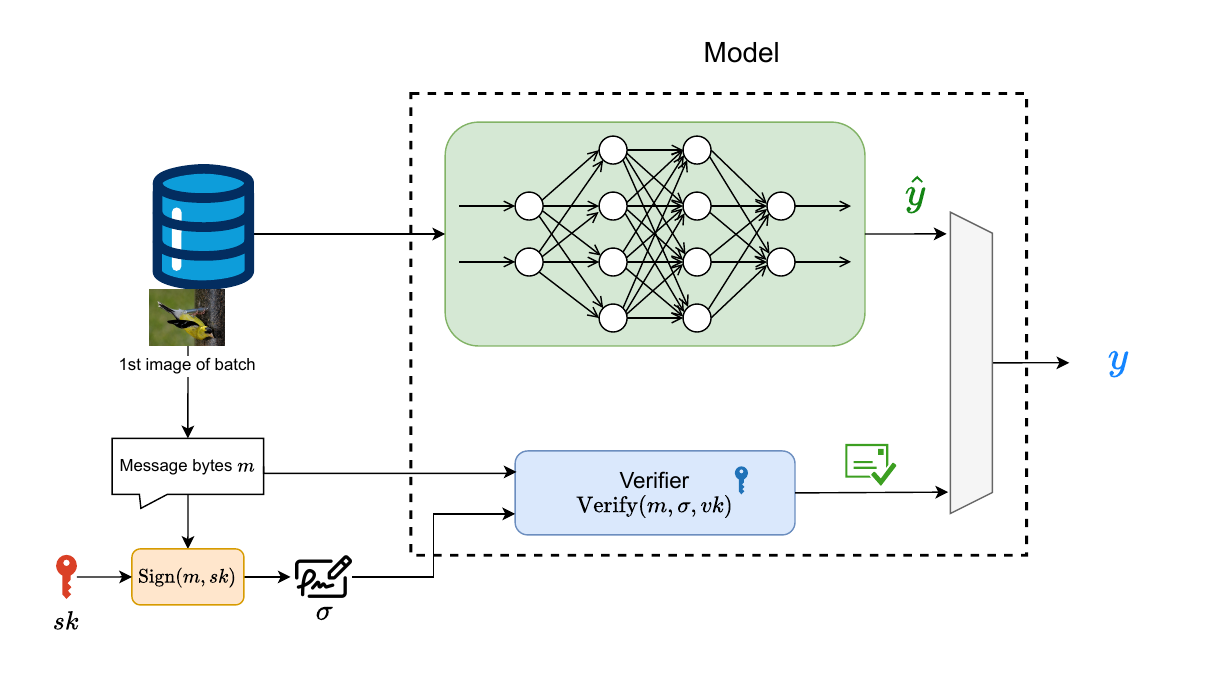}
\caption{\textit{User authentication:} when a user makes a query to the oracle, the first sample of the first batch is used as a message $m$. A signature $\sigma$ is generated using the user's provided secret key. If the verifier $\mathcal{V}$ confirms that $\sigma$ is valid, the final outputs are the NN's outputs, otherwise, they are taken over from $\mathcal{V}$'s outputs, which are basically ``garbage''.}
\label{fig:authen}
\end{figure}

In addition to blackbox-undetectable backdoor and watermark, digital signature also streamlines multi-user access to enhance model's privacy. With this authentication scheme, a party has to provide valid identity proof to query the correct model outputs, which makes it harder for attackers to extract or distill the target NN.

To query the model inference, a model user must provide a valid signing key \textsf{sk} along the input samples. Given that the input samples are provided in batches, the system first converts one of the samples into a message $m$ and produce a signature $\sigma$ with the secret key \textsf{sk}. The verifier checks the validity of identity - if $\textsc{Verify}(\sigma,m,\textsf{vk})=\text{valid}$ the final outputs are taken from the NN's predictions; otherwise, the outputs are completely manipulated. Under this scheme, no party without the correct \textsf{sk} is able to produce a valid signature $\sigma$ of a message $m$. In other words, a wrong secret key \textsf{sk} passed to the model results in ``garbage'' outputs. To construct the model, we augment it with a signature verifier akin to the previous applications.

\begin{figure}[ht]
\centering
\includegraphics[width=0.48\textwidth]{./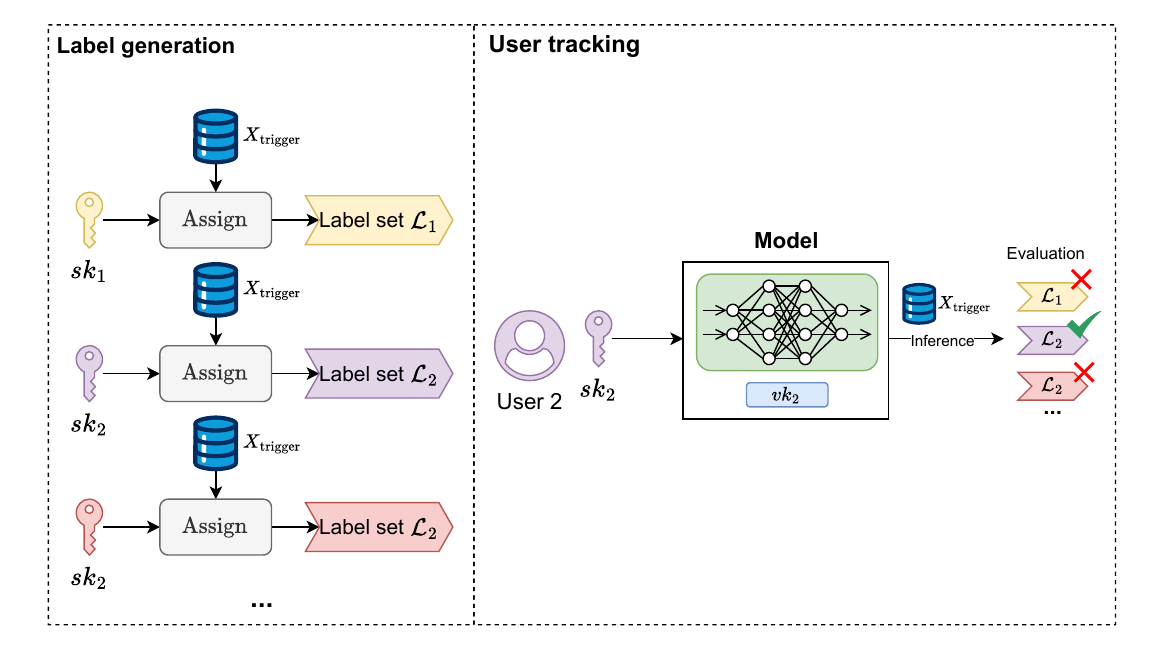}
\caption{\textit{IP tracking:} When a copy $\mathcal{M}_i$ of the original model $\mathcal{M}$ is distributed to a user $i$, a separate set of labels $\mathcal{L}_i$ of the trigger set is generated using the secret key $\textsf{sk}_i$. Only $\textsf{sk}_i$ can be used to obtain the perfect accuracy on trigger set classification task, with respect to label set $\mathcal{L}_i$, which helps to trace the user with the access to specific copy of the model.}
\label{fig:tracking}
\end{figure}

\begin{algorithm}
\caption{Sample modification}\label{alg:modify_sample}
\KwData{output logits $\hat{y}$, input sample $X$}
 
$output \gets \text{copy}(\hat{y})$\;
$m \gets \textsc{Encode}(X)$\;
$h \gets \textsc{Hash}(\textsf{sk}, m, \cdot)$\;
$l \gets \textsc{Int}(h) \mod numCls$\;
$\hat{l} \gets \text{argmax}(\hat{y})$\;
$\textsc{Swap}(output[\hat{l}], output[l])$\;
 \SetKwInOut{Output}{Output}
 \Output{$output$}
\end{algorithm}





\subsection{Tracking Unauthorized IP Sharing}
We further apply the digital signature to tracking illegitimate usage of the model. To illustrate the threat model, we can think of a situation when an authorized user $u_1$ leaks their model to an unauthorized party $u_\text{adv}$. To attribute the leak to $u_1$ rather than any other users $u_j, j\neq 1$, the model owner designs distinct trigger dataset $\mathcal{D}_\text{wm}^{(i)}$ for every user $i$, ensuring only the model distributed to user $u_k$, with valid signing key $\textsf{sk}_k$, performs well on $\mathcal{D}_\text{wm}^{(k)}$, while cannot predict correctly on trigger sets crafted for other users.

One potential concern is that if the number of authorized users grows drastically, designing separate trigger sets for each of them is not storage-efficient. Instead, a possible idea is to create only one trigger set $\mathcal{D}_\text{wm}$ with distinct labeling designs for each user. For each user $u_i$, the model's outputs on the trigger set $\mathcal{D}_\text{wm}$ match perfectly with the corresponding assigned labels $\mathcal{L}_i$ if a valid $\textsf{sk}_i$ is provided. To achieve a deterministic, unique and cryptographically safe labeling scheme for a user $u_i$, we may think of assigning a new label $\tilde{l}$ to a particular trigger sample as $\tilde{l} \leftarrow \textsc{Assign}(x; \textsf{sk})$, where $x$ is an trigger sample. With these arguments, we can design the function $\textsc{Assign}$ that satisfies these conditions:
\begin{itemize}
    \item Deterministic: the new label $\tilde{l}$ of a sample $x$ is generated using the sample itself, which does not change assuming that the trigger sample is kept intact.
    \item Unique: $\tilde{l}$ is generated given a secret key \textsf{sk}, making it unique for different user.
    \item Cryptographically safe: $\tilde{l}$ is produced by computing a cryptographic hash given $\textsf{sk}$, preventing an attacker from forging the label.
\end{itemize}

The input $x$ is first encoded into a message $m$. Given the secret key \textsf{sk}, the algorithm computes a hash of the message $h \leftarrow \text{hash}(m,\textsf{sk};\cdot)$ using a specific hashing algorithm. Assuming that each input $x \in \mathcal{D}_\text{wm}$ is unique, $\lvert \mathcal{D}_\text{wm}\lvert \ll \lvert h \rvert$ and the hash function is well-designed, the probability of collision is negligible. The final label is derived from a function $f:\{0,1\}^{\lvert h \rvert} \rightarrow [c]$, where $\lvert h \rvert$ is the size of the hash value $h$ and $c$ denotes the number of classes.




\section{Experimental Studies}\label{sec:5}

This section details our implementations for the backdoor attack and the three defense use cases. In section~\ref{subsec:5a}, we empirically exhibit the efficacy of using a signature verifier as a ``trapdoor'' to force the model to output ``garbage'' when activated by a backdoor sample. We show that this attack is very powerful given black-box access to the model, however, the computational overhead of decrypting message-signature pairs is significant. We believe that this drawback can be mitigated by using more efficient steganography or digital watermarking techniques. Section~\ref{subsec:5b} details our proof of concept for the defenses, namely watermarking, authentication and usage tracking. Our experiments demonstrate the effectiveness of these applications, which can be summarized as follows:
\begin{itemize}
    \item Watermark: only the party/auditor with valid set of signatures get perfect accuracy on trigger classification task.
    \item Authentication: only the parties/users with valid secret key can obtain meaningful results from the model, otherwise, they receive entirely unusable outputs.
    \item IP tracking: considering a model copy $\mathcal{M}_i$ distributed to a user $i$, it only yields perfect accuracy when evaluating on the trigger set $X_t$ with the corresponding label set $\mathcal{L}_i$, given that the valid secret key $\textsf{sk}_i$ is provided. When evaluating $\mathcal{M}_i$ on $X_t$ but with a different label set $\mathcal{L}_j,j\neq i$, the accuracy is considerably low. The fact that each model copy performs well solely on its designated label set facilitates its traceability.
\end{itemize}

\subsection{Backdoor Attack}\label{subsec:5a}

\begin{figure}
    \centering
    \begin{tabular}[t]{cccc}
    & Benign & Ed25519 & Dilithium2 \\
    \cmidrule{2-4}
    ImageNet & \includegraphics[width=.2\linewidth,valign=c]{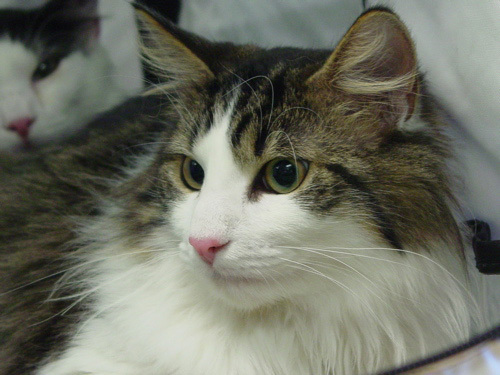} & \includegraphics[width=.2\linewidth,valign=c]{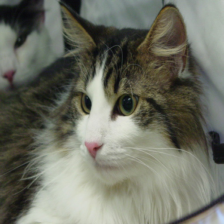} & \includegraphics[width=.2\linewidth,valign=c]{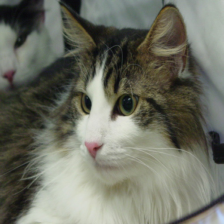}\\
    CIFAR-10 & \includegraphics[width=.2\linewidth,valign=c]{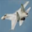} & \includegraphics[width=.2\linewidth,valign=c]{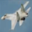} & N/A
    \end{tabular}
    \caption{Benign vs backdoored samples. The original (benign) ImageNet samples have various resolutions, therefore, we scale them all to $224\times 224$ prior to backdooring them with a (message-signature) pair. CIFAR-10 samples are consistently  $32\times32$. Dilithium2 is only applicable to ImageNet due to its long signature compared to number of pixels.}
    \label{fig:examples}
\end{figure}

\begin{table*}[!ht]
    \centering
    \caption{Results for backdoor attacks}
    \begin{tabularx}{0.75\textwidth}{XXXccc}
        \toprule
        Dataset & Test set & Sign. Algorithm & \multicolumn{3}{c}{\% Test Accuracy}\\
        \cmidrule{3-6}
        & & & Baseline & $\text{BBox}_\text{invalid}$ & $\text{BBox}_\text{valid}$ \\
        \midrule
        CIFAR-10 & Benign & --- & 92.26 & --- & --- \\
            & Backdoored & Ed25519 & 92.29 & 92.29 & 0.80 \\
        \midrule
        ImageNet & Benign & --- & 80.11 & --- & ---\\
            & Backdoored & Ed25519 & 80.11 & 80.11 & 1.03\\
            & Backdoored & Dilithium2 & 80.11 & 80.11 & 1.03\\
        \bottomrule
    \end{tabularx}
    \label{table:backdoor}
\end{table*}

As regards black-box undetectable adversarial attack, here we assume the adversary has access to the raw, unprocessed input images. The attack idea is to plant a backdoor into some image pixels so that it is imperceptible to human's eye. The backdoor can be of any kinds of perturbation, such as random noises, special embedded patterns, or bit modification of some pixel values. For the sake of simplicity, we use the least significant bit (LSB) steganography technique to hide a secret message into the backdoor images. To activate the backdoor, a correct public key must be given during inference, otherwise, the model still behaves as usual. This ensures the unmalicious parties do not observe any anomalies in model's predictions, thus the backdoor is completely hidden in normal use case. Fig~\ref{fig:examples} shows there is no perceptible discrepancy between benign and backdoored samples. For CIFAR-10, we only test with the scheme Ed25519~\cite{bernsteinHighspeedHighsecuritySignatures2011} because its signature is 64 bytes (512 bits) in length that can be fitted well into a CIFAR-10 image with $32\times32\times3 = 3072$ pixels. Meanwhile, a Dilithium2~\cite{ducasCRYSTALSDilithiumDigital2017} signature is 2420 bytes ($\approx$19K bits) in length, which cannot be distributed across the CIFAR-10 image's pixels.

\begin{figure}[h]
\centering
\includegraphics[width=0.48\textwidth]{./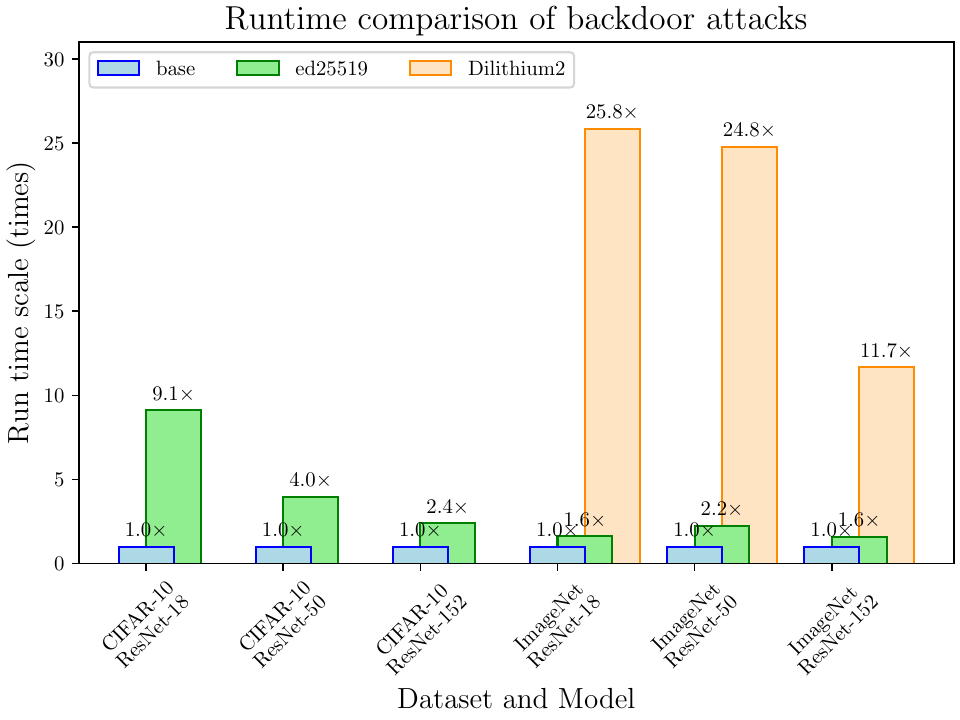}
\caption{Runtime comparison of backdoor attacks}
\label{fig:overhead}
\end{figure}

\textbf{Attack efficacy.} We evaluate the attack efficacy on ResNet-18. Table~\ref{table:backdoor} shows the classification results of backdoored models and compares with clean models. Regarding backdoored models, we also evaluate the classification accuracy when the message-signature pair $(m, \sigma)$ is embedded at wrong pixel locations in the image. Even with just one wrong pixel, the backdoor cannot be triggered. In this case, denoted as $\text{BBox}_\text{invalid}$ in Table~\ref{table:backdoor}, the classification accuracy is the same as when running a clean model. However, if $(m, \sigma)$ is embedded at valid positions (denoted as $\text{BBox}_\text{valid}$), the backdoor is activated and completely destroys the model's functionality, which can be observed by the extremely low accuracies at $0.8$ for CIFAR-10 and $1.03$ in the case of ImageNet. In this experiment, the ground-truth bounding box to extract message is $(x_\text{min},y_\text{min}, x_\text{max}, y_\text{max})=(0,0,6,6)$ and the regions to extract signature are $(7,7,25,25)$ for CIFAR-10 and $(7,7,100,100)$ for ImageNet respectively.

\textbf{Computational overhead.} We compare the differences in inference time between clean models and backdoored models (when activated). In this experiment, we tested with three model variants ResNet-18, ResNet-50 and ResNet-152 to observe the computational overhead of the message/signature decoding relative to the inference process. As mentioned previously, due to the excessive length of Dilithium2's signature, we only evaluate the runtime of Ed25519 for the CIFAR-10 backdoor cases. It can be seen from Figure~\ref{fig:overhead} that the signature/message decoding from image takes significant time. Particularly, the decoding of Dilithium2's signature and message for ImageNet dramatically overshadows the model inference time, up to $25.8$ times longer than the benign inference in the case of ResNet-18. It is also observable that the larger model used, the relative overhead of decoding is less severe as the inference takes longer. One feasible solution to reduce the computational overhead is to parallelize the steganography decoding.

\subsection{Privacy Defenses}\label{subsec:5b}

For the defense scenarios discussed in this section, we utilize the Hash-based Message Authentication Code (HMAC)~\cite{bellareKeyingHashFunctions1996} with SHA-256 to produce a hash value for a given message with respect to a secret key, which enhances security because the attacker must possess the secret key to forge the valid hash. ResNet-18 is used in all experiments, as the results can be generalized to other models. To enhance overall security, we use post-quantum scheme Dilithium2 for signature verification.

\textbf{Watermark.} We sample 100 random images from MNIST~\cite{lecun2010mnist} for trigger set, which serves as out-of-distribution trigger samples mentioned in \cite{zhangProtectingIntellectualProperty2018}. In fact, we can use whichever dataset as trigger set in this case, even if it is from the main task's dataset, because the ground-truth labels of the trigger samples are altered incorrectly for the purpose of watermark verification. For this reason, we use MNIST for the sake of simplicity. The trigger samples are resized to $32\times 32$ for CIFAR-10 classification and $224\times 224$ for ImageNet classification. Each sample $s_i$ is converted to a byte string, which is used as message $m_i$. The new label for $s_i$ is $l_i \gets\textsc{Hash}(\textsf{sk}, m_i, \text{SHA-256}) \mod n$ where \textsc{Hash}() employs HMAC and $n$ is the number of classes for the main task. A signature $\sigma_i\gets \textsc{Sign}(m_i, \textsf{sk})$ is produced for each $m_i$. The trigger set is comprised of the images samples and their new labels, and a separate set of signatures is required to validate the watermark. An authorized auditor with the valid signature set can query the model with high accuracy, whereas any parties without the required signatures cannot get good accuracy at all. Table~\ref{table:defense} demonstrates that with valid signature set $\sigma_\text{valid}$, we achieve perfect accuracy on trigger set while the complete opposite occurs when no valid signature set ($\sigma_\text{invalid}$) is inputted.

\textbf{Authentication.} In each inference batch, the first image sample is encoded into a byte message $m$. There are indeed many ways to implement this, e.g. taking random image sample from the batch, verifying authentication for only the first batch, etc,. Here we implement this process for every batch. A signature $\sigma$ is produced for $m$ given the secret key \textsf{sk}, then the verifier evaluates $\textsc{Verify}(\sigma,m, \textsf{vk})$. If $valid$ then the final outputs are the classifier's predictions, otherwise the output for each sample is modified as in Algorithm~\ref{alg:modify_sample}, where $\hat{y}$ and $l$ are the predicted logits and label respectively, $\textsc{Encode}(X)$ produces a byte message $m$ of a sample $X$ and $\textsc{Hash}()$ is hash generator, which is HMAC in our experiments. To achieve better efficiency, we encode only a small region $(x_\text{min}, y_\text{min}, x_\text{max}, y_\text{max})=(0, 0, 5, 5)$ of the sample $X$. From Table~\ref{table:defense}, we can see that without the valid \textsf{sk}, the output results are completely unusable.

\textbf{IP tracking.} We sample 100 trigger samples from MNIST~\cite{lecun2010mnist}, similar to the case of watermarking. However, there are separate sets of trigger labels for this trigger set, each of which corresponds to the valid user. For example, a model with verifying key $\textsf{vk}_1$ produces perfect trigger accuracy when evaluating with trigger label set $\mathcal{L}_1$, given a valid secret key $\textsf{sk}_1$ is provided. We simulate a scenario where there are 100 subscribed users and each user $i$ is given access to a model copy $\mathcal{M}_i$. To facilitate IP tracking, a set of trigger labels $\mathcal{L}_i$ is generated corresponding to each copy. In Table~\ref{table:defense} - \textbf{Usage tracking} we show the model distributed to user $i$, $\mathcal{M}_i$, produces perfect trigger accuracy on trigger set with labels $\mathcal{L}_i$ (assuming that the user $i$ provides valid secret key $\textsf{sk}_i$ during inference). Meanwhile, the accuracy drops significantly when the model $\mathcal{M}_i$ is evaluated on the same trigger set but with assigned labels $\mathcal{L}_j$, which is generated to track user $j$ instead of $i$. Our empirical results demonstrate that the mean accuracy in this case is very low with high confidence. Our margins of error are calculated with $95\%$ confidence interval. The highest accuracies achieved in this case are only $23\%$ for CIFAR-10 and $2\%$ for ImageNet, far from perfect. This means that cryptographic backdoor is highly effective in tracking users' access.

\begin{table}[!htbp]
    \centering
    \caption{Evaluation of defense use cases}
    \begin{threeparttable}
    \begin{tabularx}{0.45\textwidth}{Xccc}
        \toprule
        \multicolumn{4}{c}{\textbf{Watermark}}\\
        Dataset & \% Test Accuracy & \multicolumn{2}{c}{\% Trigger Accuracy}\\
        \cmidrule{3-4}
            &   & $\sigma_\text{invalid}$ & $\sigma_\text{valid}$\\
        \midrule
        CIFAR-10 & 92.26 & 14.00 & 100.00\\
        ImageNet & 80.11 & 0.00 & 100.00\\
        \midrule
        \end{tabularx}
        \begin{tabularx}{0.45\textwidth}{Xcc}
        \multicolumn{3}{c}{\textbf{User authentication}}\\
        Dataset & \multicolumn{2}{c}{\% Test Accuracy}\\
        \cmidrule {2-3}
            & $\textsf{sk}_\text{invalid}$ & $\textsf{sk}_\text{valid}$\\
        \midrule
        CIFAR-10 & 9.91 & 92.26\\
        ImageNet & 0.09 & 80.11\\
        \midrule
        \end{tabularx}
        \begin{tabularx}{0.45\textwidth}{Xcccc}
        \multicolumn{4}{c}{\textbf{Usage tracking}}\\
        Dataset & \multicolumn{3}{c}{\% Trigger Accuracy}\\
        \cmidrule{2-4}
            & $\overline{\text{Acc}}_{i,\mathcal{L}_i}$\tnote{1}& $\overline{\text{Acc}}_{i,\mathcal{L}_j}$\tnote{2} & $\max{\text{Acc}_{i,\mathcal{L}_j}}$\tnote{3}\\
        \midrule
        CIFAR-10 & \textbf{$100.00\pm 0.00$} & $10.04 \pm 0.06$ & $23.00$\\
        ImageNet & \textbf{$100.00\pm 0.00$} & $0.10\pm 0.01$ & $2.00$\\
        \bottomrule
        \end{tabularx}
        \begin{tablenotes}
            \item[1] $\overline{\text{Acc}}_{i,\mathcal{L}_i} = \frac{1}{n}\sum_{i=1}^{n}\text{Acc}(\mathcal{M}_i, X_t, \mathcal{L}_i)$: mean accuracy of model $\mathcal{M}_i$ (distributed to user $i$) on trigger set $X_t$ with label set $\mathcal{L}_i$.
            \item[2] $\overline{\text{Acc}}_{i,\mathcal{L}_j} = \frac{1}{n(n-1)}\sum_{i=1}^{n}\sum_{j=1,j\neq i}^{n}\text{Acc}(\mathcal{M}_i, X_t, \mathcal{L}_j)$: mean accuracy of model $\mathcal{M}_i$ on trigger set $X_t$ with label set $\mathcal{L}_j, j\neq i$.
            \item[3] $\max{\text{Acc}_{i,\mathcal{L}_j}} = \max_{i,j\neq i}\text{Acc}(\mathcal{M}_i, X_t, \mathcal{L}_j)$: maximum accuracy value of $\mathcal{M}_i$ on trigger set $X_t$ with label set $\mathcal{L}_j, j\neq i$.
        \end{tablenotes}
    \end{threeparttable}
    \label{table:defense}
\end{table}

\section{Conclusion}\label{sec:6}
In this work, we extend the black-box undetectable backdoor proposed from~\cite{goldwasserPlantingUndetectableBackdoors2022} to a specific use case, namely image classification task. We examine the effectiveness of this black-box signature-based backdoor for both attack and defenses. Regarding the attack, similar to the one proposed in~\cite{goldwasserPlantingUndetectableBackdoors2022}, we demonstrate that it is indeed powerful, undetectable and non-replicable for querying parties without a valid secret key. Nevertheless, the main downside of this attack lies in the significant computational cost of the signature/message decoding, which heavily depends on the digital signature scheme. As for the defenses, we propose three constructions - \textit{watermarking}, \textit{authentication} and \textit{usage tracking}, with provable undetectability, un-forgeabitlity and persistence under the assumption that the secret key is inaccessible to adversaries. Our experimental results show that the proposed schemes work effectively in assuring the privacy of the hosted NN:
\begin{itemize}
    \item \textit{Watermarking} enables a sole auditor with valid set of signatures to get the perfect trigger set accuracy, which is used to verify model ownership.
    \item \textit{Authentication} only allows legitimate users, with valid secret key, to query the real NN's outputs. Without the secret key, users only receive ``garbage'' outputs.
    \item \textit{IP tracking} ensures that each distributed model corresponds to just one secret key, making it easier to trace the source of privacy leak.
\end{itemize}

To our knowledge, our study is the first of its kind that empirically constructs a cryptographic backdoor parallelly to the host NN, which can be used for both beneficial and pernicious intentions. These implementations, still, demonstrate a few limitations, i.e. the considerable overhead of backdoor decoding in the backdoor attack and the strict requirement for black-box access for these applications to work. We believe that optimization tricks such as parallel computing or using more optimal steganography/digital watermarking techniques can help reduce the computation overhead significantly. Furthermore, as a future research direction, we aim to extend these schemes to white-box context. The idea of a white-box undetectable backdoor is also proposed in~\cite{goldwasserPlantingUndetectableBackdoors2022}, which, however is only applicable to models trained with Random Fourier Features learning paradigm~\cite{rahimiRandomFeaturesLargeScale2007}, which is not always relevant to modern ML context and is a good direction to explore.



\bibliography{IEEEabrv,refs}
%

\bibliographystyle{IEEEtran}











\newpage

 




\vfill

\end{document}